\let\sum\relax
\DeclareSymbolFont{CMlargesymbols}{OMX}{cmex}{m}{n}
\DeclareMathSymbol{\sum}{\mathop}{CMlargesymbols}{"50}
\newtheorem{theorem}{Theorem}
\newtheorem{lem}{Lemma}
\newtheorem{cor}{Corollary}
\theoremstyle{definition}
\newtheorem{as}{Assumption}
\begin{document}
	
	\title{
		Welfare ordering of voting weight allocations\footnote{I would like to thank Michel Le Breton, Yukio Koriyama, Junichiro Wada, and seminar participants at Osaka University and the 2022 Spring Meeting of Japanese Economic Association for their helpful comments and suggestions. Financial support from JSPS KAKENHI Grant (JP17K13706) and Waseda University Grant for Special Research Projects (2018K-014) is gratefully acknowledged. This paper subsumes part of the results in my earlier working paper, ``The likelihood of majority inversion in an indirect voting system,'' available at \url{https://ssrn.com/abstract=2785971}. Declarations of interest: none.}
	}
	
	\author{Kazuya Kikuchi\thanks{
			{Tokyo University of Foreign Studies. E-mail: \texttt{kazuya.kikuchi68@gmail.com}}.}}
    \date{This version: November, 2022}

	\maketitle
	\abstract{
		This paper studies the allocation of voting weights in a committee representing groups of different sizes. We introduce a partial ordering of weight allocations based on stochastic comparison of social welfare. We show that when the number of groups is sufficiently large, this ordering asymptotically coincides with the total ordering induced by the cosine proportionality between the weights and the group sizes. A corollary is that a class of expectation-form objective functions, including expected welfare, the mean majority deficit and the probability of inversions, are asymptotically monotone in the cosine proportionality.
		
		\paragraph{JEL classification:} D70, D72 
		\paragraph{Keywords:} weighted voting, welfare, proportionality
	}

	\section{Introduction}
	
	How is social welfare affected by the voting weights in a committee representing groups of different sizes? Studies in the literature have identified cases in which the weights proportional to the populations of the groups are optimal (\citealt{BarberaJackson2006} and \citealt{BeisbartBovens2007}). However, those results do not tell us, for arbitrarily given two weight allocations, which one leads to higher welfare; whether increased proportionality implies welfare improvement; and, if it does, for what measure of proportionality.
	This paper extends the previous analysis by introducing a welfare ordering of weight allocations which is the intersection of various welfare indices employed in the literature, and relating it to a specific measure of weight-to-population proportionality.

	Our analysis is based on a standard model of two-stage voting in which the preferences of individuals are  aggregated first within groups such as states or countries, and then across the groups through weighted majority voting by group representatives. We evaluate the weight allocation among the groups, at a prior stage where the individual preferences are random variables. Social welfare is measured by the (random) number of individuals who prefer the voting outcome. We define the \textit{stochastic welfare ordering} to be the partial ordering of weight allocations that ranks allocation $a$ higher than allocation $b$ if the distribution of welfare under $a$ stochastically dominates that under $b$. This ordering is the intersection of the orderings induced by expectation-form objective functions, including expected welfare, the mean majority deficit, and the probability of inversions (see the discussion of the literature below).

	The main result is that under certain assumptions about the distribution of preferences, when the number of groups is sufficiently large, the stochastic welfare ordering asymptotically coincides with the total ordering induced by \textit{cosine proportionality}, a simple index measuring the proportionality between the weights and the populations (Theorem \ref{thm:main}).\footnote{\cite{koppel2009measuring} provide an axiomatic analysis of cosine proportionality.} It also follows, as a corollary, that the class of expectation-form social objectives mentioned above are asymptotically monotone in cosine proportionality, and that they thus asymptotically induce the same total ordering of weight allocations (Corollaries \ref{cor:u} and \ref{cor:mean}).

	\bigskip
	\noindent
	{\bf Relation to the literature.}
	\cite{BarberaJackson2006} study the design of two-stage voting and derive the optimal weights that maximize the total expected utility of individuals. Their result implies that under our assumptions, the optimal weights are proportional to the populations.\footnote{\citeauthor{BarberaJackson2006}'s \citeyearpar{BarberaJackson2006} result also implies that under an alternative assumption on the preference distribution, the optimal weights are proportional to the square roots of the populations. See Section \ref{sec:indep}.}

	\cite{felsenthal1999minimizing} introduce the quantity called the majority deficit, the gap between the size of the majority camp and the number of individuals preferring the voting outcome. They study the problem of minimizing the mean majority deficit. \cite{BeisbartBovens2008} use the mean majority deficit to compare alternative systems of the US presidential election. A closely related index is the probability of inversions, i.e., the probability that the majority of individuals disagree with the social decision. This probability is computed under various specific settings by, for example, \cite{May1948},
	\cite{Hinich1975}, \cite{feix2004probability},
	\cite{kaniovski2018probability},
	and
	\cite{de2020theoretical}).

	These papers concern either the optimal voting rule (including optimal weights), or the values of welfare indices under a specific weight allocation and/or population distribution.\footnote{The above review of the literature is by no means exhaustive. In particular, there are papers in the literature that consider social objectives not listed above. For instance, \cite{Penrose1946}, \cite{BeisbartBovens2007} and \cite{Kurz2017} study the optimal weights in terms of equity among individuals. \cite{Koriyamaetal2013} study optimal weights in terms of the total utility when the utility of each individual is a concave function of the frequency with which the social decision matches his preference.} Our focus is on deriving an approximate expression of a welfare ordering over the entire set of weight allocations, which provides insight into the problem of how welfare under suboptimal weight allocations can be compared. Our result implies that the above three examples of objective functions asymptotically induce the same ordering of weight allocations, which is represented by cosine proportionality. Corollary \ref{cor:mean} also shows the asymptotic formulas of these functions.

	\section{Model}
	\label{sec:model}

	{\bf Social decision.}
	Consider a society comprising $n$ groups labeled $i=1,\cdots,n$.
	The population of group $i$ is $s_i\in[0,\bar{s}]$.
	The group is assigned a voting weight $a_i=a(s_i)\in[0,\bar{a}]$ as a function of population $s_i$. We call function $a:[0,\bar{s}]\to[0,\bar{a}]$ the \textit{weight allocation}. We assume that $a$ is Borel-measurable and not almost everywhere zero.

	The society makes a choice between two alternatives, denoted $+1$ and $-1$, through two voting stages. First, in each group, all members cast one vote for their preferred alternative.
	Then the weight of each group is divided between the alternatives in a ratio that depends on the groupwide voting result.
	The society chooses the alternative that receives most weights in total.
	
	Formally, let $X_i\in[-1,1]$ be the group-$i$ vote margin between alternatives $+1$ and $-1$ in the first voting stage (i.e., the share of group-$i$ votes for $+1$ minus that for $-1$). 
	Let $r(X_i)\in[-1,1]$
	be the weight margin in the second stage (i.e., the share of group $i$'s weight for $+1$ minus that for $-1$). The function $r:[-1,1]\to[-1,1]$ represents the rule that converts the groupwide voting result into the division of the weight between the alternatives. We assume that $r$ is nondecreasing, odd and not identically zero. Examples of $r$ include winner-take-all $r(x)=\text{ sign }x$ and proportional representation $r(x)=x$.

	The social decision is $+1$ or $-1$, depending on the sign of the total weight margin $T=\sum a_ir(X_i)$:
	\[
	D=\begin{cases}
		{\rm sign\, } T &\text{ if }
		T\neq0\\
		\text{$\pm1$ equally likely}
		&\text{ if } T=0,
	\end{cases}
	\]
	where we assume that a coin is flipped in case of a tie.

	\bigskip
	\noindent
	{\bf Random preferences.}
	We consider the prior stage where individual preferences are random variables. This implies that the vote margins $X_i$ are random variables. We assume that the distribution of $X_i$'s satisfies the following:
	
	\begin{as}
		$X_1,X_2,\cdots$ are independent and identically distributed random variables, whose common distribution is nondegenerate and symmetric about 0.
		\label{as:x}
	\end{as}
	
	This assumption captures a situation in which individual preferences are correlated within the groups, independent across the groups, and ex ante neutral with respect to the alternatives.\footnote{The intragroup-correlation part of this interpretation is valid if we consider the asymptotic situation where the population of each group is sufficiently large and redefine $X_i$ to be the (almost sure) limit of the groupwide vote margin. If individual preferences are independent within group $i$, then the law of large numbers and the symmetry assumption imply that $X_i$ is degenerate at 0, violating the nondegeneracy assumption. Alternatively, if the preferences of members of group $i$ are correlated via random variable $X_i$ so that conditional on $X_i$, each member prefers $+1$ with probability $(1+X_i)/2$ and prefers $-1$ with probability $(1-X_i)/2$, then the law of large numbers implies that $X_i$ is indeed the limit of the groupwide vote margin.} It is well known that under this assumption, the perfectly proportional weight allocation (i.e., $a(s)\propto s$) maximizes the expected welfare. The object of this paper is to see how this observation extends to welfare comparison of arbitrary weight allocations.
	See Section \ref{sec:ext} for a discussion of the distributional assumption.

	\bigskip
	\noindent
	{\bf Welfare and proportionality.}
	We measure social welfare by the (random) number of individuals who prefer the social decision minus those who do not.\footnote{Section \ref{sec:utility} discusses the alternative definition of social welfare as the sum of utilities when preference intensities may vary across individuals.} This equals $DS$, where $S=\sum s_iX_i$ is the total vote margin between alternatives $+1$ and $-1$. For convenience, we normalize it by dividing by the standard deviation $\sigma=\sqrt{\mathbf{E}(X_1^2)\times\sum s_i^2}$:
	\[
	W=\frac{DS}{\sigma}.
	\]
	We denote by $H_a$
	the distribution of welfare $W$, which depends on the weight allocation $a$.
	
	The \textit{stochastic welfare ordering} is the partial ordering $\succsim$ on the set of weight allocations,
	defined as follows: for any two weight allocations $a$
	and $b$,
	$a\succsim b$
	if $H_a$ stochastically dominates $H_b$.\footnote{We say that a distribution $F$ \textit{stochastically dominates} another distribution $G$ if $F(x)\leq G(x)$ for all $x$.}

	Apart from the welfare properties, weight allocations can also be evaluated in terms of the proportionality to the populations. There are various measures of proportionality. One such measure is \textit{cosine proportionality}, defined by the cosine of the angle between the vector of populations in the groups and the vector of weights of the groups:
	\[
	c(a)=\frac{\sum s_ia_i}{\sqrt{
			\sum s_i^2\sum a_i^2
	}}.
	\]
	
	\bigskip
	\noindent
	{\bf Large number of groups.}
	To obtain a clear analytical result, we focus on the asymptotic situation in which the number of groups is sufficiently large.
	We introduce the sequence of societies, indexed by $n=1,2,3\cdots$, in which the $n$th society consists of the $n$ groups $i=1,\cdots,n$.

	We claim that
	\textit{as $n\to\infty$,
		the welfare distribution
		$H_a$ converges in law
		to a nondegenerate
		distribution
		$H_{a}^\ast$}.
	This claim is proved in the next section (Lemma \ref{lem:sn}).
	For now, let us take the existence of an asymptotic distribution as true, and define the \textit{asymptotic stochastic welfare order} $\succsim^\ast$ as follows:
	for any two weight allocations $a$ and $b$, $a\succsim^\ast b$ if $H_a^\ast$ stochastically dominates $H_b^\ast$.

	To define the asymptotic version of cosine proportionality, we need the following assumption on the sequence of group sizes.

	\begin{as}
		Let $\Psi^n$ be the distribution of group sizes in the $n$th society:
		$\Psi^n(s)=\sharp\{i\leq n|s_i\leq s\}/n \text{ for }
		s\in[0,\bar{s}]$.
		As $n\to\infty$, $\Psi^n$ weakly converges to a distribution $\Psi^\ast$, not degenerate at 0.
	\end{as}
	
	Under this assumption, the 
	\textit{asymptotic
		cosine proportionality} of weight allocation $a$
	is defined as the limit of $c(a)$ as $n\to\infty$, which is
	\[
	c^\ast(a)=
	\frac{\int sa(s)d\Psi^\ast(s)}{
		\sqrt{
			\int
			s^2 d\Psi^\ast(s)
			\int a(s)^2
			d\Psi^\ast(s)
		}
	}.
	\]

	\section{Asymptotic equivalence theorem}
	\label{sec:result}

	\begin{theorem}
		The asymptotic stochastic welfare order coincides with the total order induced by asymptotic cosine proportionality. That is, for any weight allocations $a$ and $b$,
		\[
		a
		\succsim^\ast
		b
		\Leftrightarrow\,
		c^\ast(a)\geq
		c^\ast(b).
		\]
		\label{thm:main}	
	\end{theorem}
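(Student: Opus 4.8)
The plan is to reduce the theorem to a one-dimensional monotonicity statement about a family of distributions indexed by the correlation coefficient of a bivariate Gaussian.

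First I would pin down the limiting welfare distribution $H_a^\ast$ explicitly. Writing $S=\sum s_iX_i$ and $T=\sum a_ir(X_i)$, both are sums of independent, uniformly bounded, mean-zero random variables (mean zero because $X_1$ is symmetric and $r$ is odd). Their covariance matrix has entries $\mathbf{E}(X_1^2)\sum s_i^2$, $\mathbf{E}(r(X_1)^2)\sum a_i^2$ and $\mathbf{E}(X_1r(X_1))\sum s_ia_i$; since the total variances diverge under Assumption~2 (the denominators defining $c^\ast$ being positive), Lindeberg's condition holds and the multivariate CLT gives joint convergence of the normalized pair $(S/\sigma,\,T/\sqrt{\mathrm{Var}\,T})$ to a standard bivariate normal $(Z_1,Z_2)$ whose correlation is
\[
\frac{\mathbf{E}(X_1r(X_1))}{\sqrt{\mathbf{E}(X_1^2)\mathbf{E}(r(X_1)^2)}}\;c(a)\;\longrightarrow\;\lambda\,c^\ast(a),\qquad \lambda:=\frac{\mathbf{E}(X_1r(X_1))}{\sqrt{\mathbf{E}(X_1^2)\mathbf{E}(r(X_1)^2)}}.
\]
Here $\lambda\in(0,1]$ is a constant independent of $a$: positivity holds because $r$ is odd and nondecreasing so $X_1r(X_1)\ge0$, and $\lambda\le1$ by Cauchy--Schwarz. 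Since $W=DS/\sigma=\mathrm{sign}(T)\,(S/\sigma)$ and ties have vanishing limiting probability, the continuous mapping theorem (the map $(x,y)\mapsto\mathrm{sign}(y)\,x$ is a.s.\ continuous under the limit law) identifies $H_a^\ast$ as the law of $V_\rho:=\mathrm{sign}(Z_2)Z_1$ with $\rho=\lambda\,c^\ast(a)\in[0,1]$. This part essentially reproduces Lemma~\ref{lem:sn}.

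The crux is then to show that the family $\{V_\rho\}$ is strictly stochastically increasing in $\rho$. Conditioning on $Z_2$ and writing $Z_1=\rho Z_2+\sqrt{1-\rho^2}\,Z_3$ with $Z_3\perp Z_2$ gives the representation $V_\rho\stackrel{d}{=}\rho\,|Z|+\sqrt{1-\rho^2}\,Z'$ with $Z,Z'$ independent standard normals, so that the CDF is
\[
F(v;\rho)=\int_0^\infty 2\phi(t)\,\Phi\!\Big(\tfrac{v-\rho t}{\sqrt{1-\rho^2}}\Big)\,dt.
\]
Differentiating in $\rho$ and computing $t^2+\big(\tfrac{v-\rho t}{\sqrt{1-\rho^2}}\big)^2$ by completing the square shows that the integrand's $\rho$-derivative is a positive multiple of $(\rho v-t)$ times a weight proportional to the $N(\rho v,\,1-\rho^2)$ density restricted to $t\ge0$. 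Hence $\partial_\rho F(v;\rho)$ has the sign of $\rho v-\mathbf{E}[Y\mid Y\ge0]$ with $Y\sim N(\rho v,1-\rho^2)$, which is strictly negative because truncating a normal from below strictly raises its mean above $\rho v$. Thus $F(v;\rho)$ is strictly decreasing in $\rho$ for every $v$.

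Finally I would assemble the equivalence. Monotonicity gives the easy direction: $c^\ast(a)\ge c^\ast(b)\Rightarrow\rho_a\ge\rho_b\Rightarrow F(\cdot;\rho_a)\le F(\cdot;\rho_b)$, i.e.\ $a\succsim^\ast b$. Conversely, if $a\succsim^\ast b$ (so $F(\cdot;\rho_a)\le F(\cdot;\rho_b)$) but $c^\ast(a)<c^\ast(b)$, then $\rho_a<\rho_b$ and strict monotonicity yields $F(\cdot;\rho_b)<F(\cdot;\rho_a)$, a contradiction; hence $c^\ast(a)\ge c^\ast(b)$. The main obstacle is the monotonicity step, and the key simplification making it tractable is that, after completing the square, $\partial_\rho F$ reduces to a truncated-normal mean comparison; without that reduction the sign of $\partial_\rho F$ is not apparent, since the factor $\rho v-t$ changes sign on the domain of integration.
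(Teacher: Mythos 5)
Your proposal is correct and follows essentially the same route as the paper: a Cram\'er--Wold/Lindeberg joint CLT for $(S/\sigma,T/\tau)$ with limiting correlation $\rho c^\ast(a)$, identification of the limit welfare law as the skew normal (your representation $\rho|Z|+\sqrt{1-\rho^2}\,Z'$ is the standard stochastic form of the paper's $SN(\lambda_a)$), and stochastic monotonicity obtained by differentiating the CDF in the parameter and reducing the sign to a truncated-normal mean comparison. The only difference is cosmetic --- you parametrize by the correlation $\rho$ and work with the convolution form of the CDF, while the paper parametrizes by the skewness $\lambda$ and differentiates $\int_{-\infty}^x 2\phi(y)\Phi(\lambda y)\,dy$ directly --- so no further comparison is needed.
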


	We divide the proof of the theorem
	into the
	two lemmas below.
	We need some notations.
	Let $\tau=\sqrt{\mathbf{E}\left[r(X_1)^2\right]\sum
		a_i^2}$ be the standard deviation of the total weight margin $T=\sum a_ir(X_i)$. Recall also that $S=\sum s_iX_i$ denotes the total vote margin, and $\sigma=\sqrt{\mathbf{E}\left[X_1^2\right]\sum
		s_i^2}$ its standard deviation.
	Let $\rho$
	be the correlation coefficient
	of
	$X_{i}$
	and $r(X_{i})$:
	\begin{equation*}
		\begin{split}
			\rho=\text{Corr}(X_i,r(X_i))&=
			\frac{\mathbf{E}\left[X_ir(X_i)\right]}{\sqrt{
					\mathbf{E}\left[
					X^2_i
					\right]
					\mathbf{E}\left[
					r(X_i)^2
					\right]
			}}>0.
		\end{split}
	\end{equation*}
	The positiveness
	of $\rho$
	can be checked
	by noting that function $r$
	is an odd and nondecreasing function
	not identically zero,
	and that $X_i$ has a symmetric and
	nondegenerate distribution.
	
	\begin{lem}
		As $n\to\infty$,
		the
		random vector
		$\left(
		S/\sigma,T/\tau
		\right)$
		converges in law
		to the bivariate
		normal distribution
		with zero means,
		unit variances
		and correlation
		coefficient
		$\rho c^\ast(a)$.
		\label{lem:clt}
	\end{lem}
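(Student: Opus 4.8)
The plan is to prove joint asymptotic normality through the Cramér–Wold device. Fix an arbitrary pair $(\lambda,\mu)\in\mathbb{R}^2$ and consider the scalar linear combination
\[
Z_n=\lambda\frac{S}{\sigma}+\mu\frac{T}{\tau}=\sum_{i=1}^n Y_{n,i},
\qquad
Y_{n,i}=\lambda\frac{s_iX_i}{\sigma}+\mu\frac{a_ir(X_i)}{\tau}.
\]
It suffices to show that $Z_n$ converges in law to the univariate normal with mean $0$ and variance $v(\lambda,\mu)=\lambda^2+\mu^2+2\lambda\mu\,\rho\,c^\ast(a)$, since this is exactly the law of $\lambda U+\mu V$ when $(U,V)$ has the claimed bivariate normal distribution. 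For each $n$ the summands $Y_{n,1},\dots,Y_{n,n}$ are independent (Assumption \ref{as:x}) and centered: $\mathbf{E}[X_i]=0$ because $X_i$ is symmetric about $0$, and $\mathbf{E}[r(X_i)]=0$ because $r$ is odd while $X_i$ is symmetric. Thus $Z_n$ is a centered sum of independent terms, and the problem reduces to a triangular-array central limit theorem.

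First I would compute the limiting variance. Since the summands are independent and centered,
\[
\mathrm{Var}(Z_n)=\sum_{i=1}^n\mathbf{E}\!\left[Y_{n,i}^2\right]
=\lambda^2\,\frac{\mathbf{E}[X_1^2]\sum s_i^2}{\sigma^2}
+\mu^2\,\frac{\mathbf{E}[r(X_1)^2]\sum a_i^2}{\tau^2}
+2\lambda\mu\,\frac{\mathbf{E}[X_1r(X_1)]\sum s_ia_i}{\sigma\tau}.
\]
By the definitions of $\sigma$ and $\tau$ the first two coefficients equal $\lambda^2$ and $\mu^2$ exactly, while the cross term factors as $\rho\,c(a)$: writing $\sigma\tau=\sqrt{\mathbf{E}[X_1^2]\mathbf{E}[r(X_1)^2]\sum s_i^2\sum a_i^2}$, the cross term splits into the correlation $\rho=\mathbf{E}[X_1r(X_1)]/\sqrt{\mathbf{E}[X_1^2]\mathbf{E}[r(X_1)^2]}$ times the cosine proportionality $c(a)=\sum s_ia_i/\sqrt{\sum s_i^2\sum a_i^2}$. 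Because $c(a)\to c^\ast(a)$, which is the defining property of asymptotic cosine proportionality and follows from the weak convergence $\Psi^n\Rightarrow\Psi^\ast$ in Assumption 2, we obtain $\mathrm{Var}(Z_n)\to v(\lambda,\mu)$.

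Next I would verify a Lindeberg–Feller condition. Because $|X_i|\le1$, $|r(X_i)|\le1$, $s_i\le\bar s$ and $a_i\le\bar a$, each summand obeys the deterministic bound $|Y_{n,i}|\le M_n:=|\lambda|\bar s/\sigma+|\mu|\bar a/\tau$. Assumption 2 forces $\sigma\to\infty$ and $\tau\to\infty$: since $\Psi^\ast$ is not degenerate at $0$ we have $n^{-1}\sum s_i^2=\int s^2\,d\Psi^n\to\int s^2\,d\Psi^\ast>0$, so $\sum s_i^2\to\infty$, and similarly $\sum a_i^2\to\infty$; combined with $\mathbf{E}[X_1^2]>0$ and $\mathbf{E}[r(X_1)^2]>0$ this gives $M_n\to0$. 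Consequently $\sum_i\mathbf{E}[Y_{n,i}^2\mathbf{1}\{|Y_{n,i}|>\varepsilon\}]=0$ for all large $n$, so the Lindeberg condition holds trivially and the Lindeberg–Feller theorem yields $Z_n\Rightarrow\mathcal{N}(0,v(\lambda,\mu))$; when $v(\lambda,\mu)=0$ the same bound gives $Z_n\to0$ in $L^2$, i.e.\ convergence to the degenerate limit. Letting $(\lambda,\mu)$ range over $\mathbb{R}^2$ and invoking Cramér–Wold completes the proof.

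I expect the only genuinely delicate point to be the covariance bookkeeping of the second paragraph—recognizing that the cross term collapses exactly to the product $\rho\,c(a)$ and then passing $c(a)\to c^\ast(a)$ under Assumption 2. The diverging normalizations $\sigma,\tau\to\infty$ make the negligibility step essentially automatic, so the non-degeneracy hypotheses on both $\Psi^\ast$ and the preference distribution are precisely what the argument consumes.
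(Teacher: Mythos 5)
Your proposal is correct and follows essentially the same route as the paper: Cram\'{e}r--Wold reduction to a scalar linear combination, explicit computation of the variance with the cross term collapsing to $\rho\,c(a)\to\rho\,c^\ast(a)$, and a Lindeberg condition that holds trivially because the bounded summands are divided by $\sigma,\tau\to\infty$. The only differences are cosmetic refinements on your side --- you spell out why Assumption 2 forces $\sigma,\tau\to\infty$ and you treat the degenerate case $v(\lambda,\mu)=0$ separately, both of which the paper leaves implicit.
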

	
	\begin{proof}[Proof of Lemma \ref{lem:clt}]
		We use the superscript
		$n$ to clarify
		the number of groups.
		Fix two real numbers
		$\xi$
		and $\eta$
		not both zero, and
		let
		\[
		Z^n=
		\xi
		\frac{S^n}{\sigma^n}+
		\eta
		\frac{T^n}{\tau^n}.
		\]
		The variance
		of $Z^n$
		is 
		\[
		(\zeta^n)^2=\xi^2+\eta^2+2\xi\eta
		\rho c^n(a),
		\]
		where $c^n(a)$
		denotes cosine proportionality.

		By the Cram\'{e}r-Wold
		device,
		the proof of the lemma
		reduces to
		showing
		that
		$S^n$
		converges in law
		to the normal distribution
		with mean 0
		and variance
		$\xi^2+\eta^2+2\xi\eta
		\rho c^\ast(a)$.
		Since $c(a)\to c^\ast(a)$, we have
		\begin{equation}
			(\zeta^n)^2\to
			\xi^2+\eta^2+2\xi\eta
			\rho c^\ast(a).
			\label{eq:zeta}
		\end{equation}
		Thus it suffices to show that
		$Z^n/\zeta^n$
		converges in law to
		the standard normal distribution
		$N(0,1)$.

		Note that $Z^n$
		is a sum of
		independent
		random variables:
		\[
		Z^n=\sum_{i=1}^nR_{i}^n,
		\text{ where } 
		R_{i}^n=
		\xi\frac{s_iX_i}{\sigma^n}
		+\eta
		\frac{a(s_i)r(X_i)}{\tau^n}.
		\]
		To show that
		$Z^n/\zeta^n$
		converges in law
		to $N(0,1)$,
		we apply the Lindeberg Central
		Limit Theorem  to the triangular array $\{R_{i}^n:\,i\leq 
		n;\,
		n=1,2,\cdots\}$.\footnote{
			See, e.g., Theorem 27.2
			in \cite{billingsley2008probability}.}
		We need to check Lindeberg's condition:
		\begin{equation}
			\text{For
				any $\delta>0$,
			}\sum_{i=1}^n\mathbf{E}\left[
			\left(\frac{R_{i}^n}{
				\zeta^n}
			\right)^2
			\mathbf{1}_{\left\{
				\left(\frac{R_{i}^n}{
					\zeta^n}
				\right)^2
				>\delta\right\}}
			\right]\to0.
			\label{eq:lindeberg}
		\end{equation}
		In the definition
		of $R_i^n$, the numerators 
		$s_{i}X_{i}$ and $a(s_{i})r(X_{i})$
		are bounded,
		while the denominators $\sigma^n$
		and $\tau^n$ tend
		to infinity ($\zeta^n$
		converges to the finite limit
		(\ref{eq:zeta})).
		Thus, for any $\delta>0$,
		there exists $n_0$
		such that
		for all $n>n_0$
		and all $i=1,\cdots,n$,
		$\left(R_{i}^n/
		\zeta^n
		\right)^2
		<\delta$
		almost surely,
		which implies that
		the sum
		in (\ref{eq:lindeberg})
		is zero
		for all $n>n_0$.
	\end{proof}

	The \textit{skew normal
		distribution}\footnote{
		See
		\cite{azzalini1996multivariate} for basic
		properties
		of the skew normal
		distribution.}
	with parameter $\lambda$,
	denoted
	$SN(\lambda)$,
	is the distribution
	with the density function
	\[
	2\phi(x)\Phi(\lambda x)\text{ for }
	x\in\mathbf{R},
	\]
	where
	$\phi$
	and $\Phi$
	denote the
	density function
	and the distribution
	function
	of the standard
	normal distribution
	$N(0,1)$.

	\begin{lem}
		As $n\to\infty$, social welfare
		$W\to W^\ast$
		in law, where $W^\ast$ has 
		the skew normal
		distribution
		$SN(\lambda_a)$
		with parameter
		$\lambda_a=
		\rho c^\ast(a)/\sqrt{1-\rho^2
			c^\ast(a)^2}$.
		The distribution $SN(\lambda_a)$ is stochastically
		increasing in the asymptotic
		cosine proportionality $c^\ast(a)$.
		\label{lem:sn}
	\end{lem}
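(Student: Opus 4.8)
The plan is to write welfare as a fixed (though discontinuous) function of the pair whose limit is given by Lemma~\ref{lem:clt}, and then pass to the limit by the continuous mapping theorem. Since $D=\text{sign}\,T$ whenever $T\neq0$ and $\tau>0$, on the event $\{T\neq0\}$ we have the identity
\[
W=\frac{DS}{\sigma}=\text{sign}\!\left(\frac{T}{\tau}\right)\frac{S}{\sigma}=g\!\left(\frac{S}{\sigma},\frac{T}{\tau}\right),\qquad g(u,v):=\text{sign}(v)\,u.
\]
By Lemma~\ref{lem:clt}, $(S/\sigma,T/\tau)$ converges in law to a centered bivariate normal pair $(U,V)$ with unit variances and correlation $\delta:=\rho\,c^\ast(a)$ (which satisfies $|\delta|\le 1$, and $|\delta|<1$ in the nondegenerate case). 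The map $g$ is continuous except on the line $\{v=0\}$, and this set has probability zero under the limit since $V\sim N(0,1)$. First I would therefore invoke the continuous mapping theorem to get $g(S/\sigma,T/\tau)\to g(U,V)=\text{sign}(V)\,U$ in law.

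The delicate point, which I expect to be the main obstacle, is the tie-breaking rule: the coin flip on $\{T=0\}$ makes $W$ differ from $g(S/\sigma,T/\tau)$ on that event, so the continuous mapping conclusion does not transfer to $W$ for free. I would dispose of this by showing the tie event is asymptotically negligible. For any $\varepsilon>0$ we have $\{T=0\}\subseteq\{|T/\tau|\le\varepsilon\}$, so the Portmanteau theorem applied to the closed interval gives $\limsup_n\mathbf{P}(T=0)\le\mathbf{P}(|V|\le\varepsilon)$, and letting $\varepsilon\to0$ yields $\mathbf{P}(T=0)\to0$. Hence $W$ and $g(S/\sigma,T/\tau)$ coincide with probability tending to one, so by Slutsky they share the weak limit $\text{sign}(V)\,U$.

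It remains to identify this limit law, which is a routine conditioning computation. Conditioning on $U=y$, the variable $V$ is normal with mean $\delta y$ and variance $1-\delta^2$, so $\mathbf{P}(V>0\mid U=y)=\Phi\!\big(\delta y/\sqrt{1-\delta^2}\big)$. Using the symmetry $(U,V)\overset{d}{=}(-U,-V)$ of the centered Gaussian, one checks $\mathbf{P}(\text{sign}(V)U\le y)=2\,\mathbf{P}(V>0,\,U\le y)$, and differentiating in $y$ gives the density
\[
2\phi(y)\,\mathbf{P}(V>0\mid U=y)=2\phi(y)\,\Phi(\lambda_a y),\qquad \lambda_a=\frac{\delta}{\sqrt{1-\delta^2}}=\frac{\rho\,c^\ast(a)}{\sqrt{1-\rho^2 c^\ast(a)^2}},
\]
which is exactly the $SN(\lambda_a)$ density, establishing the first assertion.

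For the monotonicity claim I would show directly that the $SN(\lambda)$ distribution function $G_\lambda(x)=\int_{-\infty}^x 2\phi(t)\Phi(\lambda t)\,dt$ is nonincreasing in $\lambda$. Differentiating under the integral sign gives $\partial_\lambda G_\lambda(x)=\int_{-\infty}^x 2t\,\phi(t)\phi(\lambda t)\,dt$, whose integrand is an odd function of $t$ because $\phi(t)\phi(\lambda t)$ is even; it is negative on $(-\infty,0)$, positive on $(0,\infty)$, and integrates to zero over $\mathbf{R}$. Thus the partial integral is manifestly $\le 0$ for $x\le 0$, while for $x\ge 0$ it equals $-\int_x^\infty 2t\,\phi(t)\phi(\lambda t)\,dt\le 0$. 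Hence $G_\lambda$ decreases pointwise in $\lambda$, so $SN(\lambda)$ is stochastically increasing in $\lambda$. Finally $\lambda_a$ is increasing in $c^\ast(a)$, since $\rho>0$ and $x\mapsto x/\sqrt{1-x^2}$ is increasing on $(-1,1)$; composing the two monotonicities yields the stated stochastic monotonicity in $c^\ast(a)$.
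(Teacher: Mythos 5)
Your proposal is correct and follows essentially the same route as the paper: apply Lemma \ref{lem:clt}, pass to the limit through the a.e.-continuous map $(u,v)\mapsto u\,\mathrm{sign}(v)$, identify the limit as $SN(\lambda_a)$ by conditioning on the sign of the second coordinate, and prove stochastic monotonicity by showing $\partial_\lambda\int_{-\infty}^x 2\phi(t)\Phi(\lambda t)\,dt\le 0$. The only differences are cosmetic: you handle the tie event explicitly and derive the skew-normal density by hand where the paper cites Azzalini's Proposition 2, and your odd-integrand argument for the sign of the derivative is an equivalent substitute for the paper's conditional-expectation argument.
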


	\begin{proof}[Proof of Lemma
		\ref{lem:sn}]
		Welfare $W$ can be written as
		\[
		W=
		\left(
		\text{sign }
		\frac{T}{\tau}+L
		\mathbf{1}_{
			\left\{
			\frac{T}{\tau}
			=0
			\right\}}
		\right)\times
		\frac{S}{\sigma},
		\]
		where $L$ is a tie-breaking variable that takes values $\pm1$ with equal probabilities and is independent of all other variables.
		By Lemma \ref{lem:clt},
		$\left(
		S/\sigma,
		T/\tau
		\right)$
		converges in law to a random vector
		$(S^\ast,T^\ast)$
		that has
		the bivariate normal distribution with zero means,
		unit variances
		and correlation coefficient
		$\rho c^\ast(a)$.
		Since $T^\ast\neq0$
		almost surely
		and the function
		$f(s,t)=s\text{\,sign\,}t$
		is continuous
		except on a set of probability zero,
		$W$
		converges in law
		to the random variable
		$W^\ast=S^\ast
		\text{ sign }T^\ast$.
		
		We now check that
		the distribution of $W^\ast$
		is $SN(\lambda_a)$ as stated
		in the lemma, by
		conditioning on
		the sign of $T^\ast$.
		The conditional distribution
		of
		$W^\ast$
		given $T^\ast>0$
		equals
		the conditional distribution
		of $S^\ast$
		given $T^\ast>0$.
		By Proposition 2 in \cite{azzalini1996multivariate},
		the latter conditional distribution
		is the 
		skew normal distribution
		$SN(\lambda_a)$
		with the parameter $\lambda_a$
		specified in the lemma.
		Similarly,
		the conditional distribution
		of $W^\ast$
		given $T^\ast<0$ is also $SN(\lambda_a)$.

		For the last
		sentence
		of the lemma,
		it suffices to show that
		the skew normal distribution
		$SN(\lambda)$
		is stochastically increasing
		in $\lambda$.
		Let $H(x;\lambda)$
		be the distribution function
		of $SN(\lambda)$:
		\[
		H(x;\lambda)=\int_{-\infty}^x
		2\phi(y)\Phi(\lambda y)dy.
		\]
		We show
		that $\frac{\partial H}{\partial \lambda}<0$.
		The derivative
		is
		$\frac{\partial H}{\partial \lambda}
		=\int_{-\infty}^x
		2y\phi(y)\phi(\lambda y)dy$.
		This expression is proportional
		to the conditional expectation $\mathbf{E}(Y|Y\leq x)$,
		where $Y$
		is a random variable having a
		density function proportional to
		$\phi(y)\phi(\lambda
		y)$.
		Obviously $Y$
		is normally distributed with mean 0.
		Thus
		$\int_{-\infty}^x
		2y\phi(y)\phi(\lambda y)dy
		\propto\mathbf{E}(Y
		|Y\leq x)<\mathbf{E}(Y)=0$
		and therefore $\frac{\partial H}{\partial \lambda}<0$.
	\end{proof}

	\section{Expectation-form objective functions}
	\label{sec:vnm}
	
	We discuss
	the implications
	of the asymptotic equivalence theorem
	to social objective functions of the form $\mathbf{E}[f(W)]$ for a nondecreasing function $f$.

	To deal with the limits of expectations, we restrict the class of functions $f$. We say that a function $f:\mathbf{R}\to\mathbf{R}$
	has a \textit{square exponential bound}	if
	there exist constants
	$\alpha>0$ and $\beta\in(0,1)$
	such that
	\[
	|f(w)|\leq \exp\left(\alpha+\beta w^2
	\right)
	\]
	for all $w\in\mathbf{R}$.
	This condition is fairly weak.
	It allows
	$f(w)$
	to increase
	as fast as
	$e^{\beta w^2}$
	when $w\to\infty$
	and decrease
	as fast as
	$-e^{\beta w^2}$
	when $w\to-\infty$,
	for some $\beta\in(0,1)$.

	\begin{cor}
		Let
		$f$ be any nondecreasing function
		with a square exponential bound.
		\begin{itemize}
			\item[(i)]
			As $n\to\infty$,
			$\mathbf{E}\left[
			f(W)\right]\to\mathbf{E}\left[
			f(W^\ast)\right]$, where $W^\ast$ has the skew normal distribution $SN(\lambda_a)$ with parameter $\lambda_a=
			\rho c^\ast(a)/\sqrt{1-\rho^2
				c^\ast(a)^2}$.
			\item[(ii)]
			The limit
			$\mathbf{E}\left[
			f(W^\ast)\right]$ is a nondecreasing function of the asymptotic cosine proportionality $c^\ast(a)$.
		\end{itemize}
		\label{cor:u}
	\end{cor}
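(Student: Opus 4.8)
The plan is to obtain (i) by upgrading the weak convergence $W\to W^\ast$ of Lemma \ref{lem:sn} to convergence of the expectations $\mathbf{E}[f(W)]$, and to obtain (ii) directly from the stochastic monotonicity already recorded in that lemma. The key preliminary simplification is that $D^2=1$, so $W^2=(S/\sigma)^2$; hence the square-exponential bound gives $|f(W)|\le\exp\big(\alpha+\beta(S/\sigma)^2\big)$, and every integrability question about $W$ becomes one about the normalized total vote margin $S/\sigma=\sum_i(s_i/\sigma)X_i$, a sum of independent, uniformly bounded, mean-zero random variables of unit variance.

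For (i), weak convergence alone is insufficient because $f$ is unbounded, so I would prove that the family $\{f(W)\}_{n}$ is uniformly integrable and then invoke the standard fact that weak convergence together with uniform integrability yields convergence of expectations. (Weak convergence of $f(W)$ to $f(W^\ast)$ itself follows from the continuous mapping theorem, since a nondecreasing $f$ is continuous off a countable set, which is null under the density of $W^\ast$.) Uniform integrability I would get from an $L^p$ bound for some $p>1$: since $|f(W)|^p\le e^{p\alpha}\exp\big(p\beta(S/\sigma)^2\big)$, it suffices to show $\sup_n\mathbf{E}\big[\exp(\gamma(S^n/\sigma^n)^2)\big]<\infty$ with $\gamma=p\beta$. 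This is where the large-number-of-groups structure enters: because $s_i\le\bar s$ while $\sigma^n\to\infty$, the normalized weights $s_i/\sigma^n$ vanish uniformly in $i$, so $S^n/\sigma^n$ is a normalized sum with asymptotically negligible terms. A Bernstein/Bennett estimate then shows that $S^n/\sigma^n$ is sub-Gaussian with a variance proxy that shrinks to its true value $1+o(1)$, and a sub-Gaussian variable with proxy $v$ has $\mathbf{E}[\exp(\gamma Z^2)]$ uniformly bounded for $\gamma<1/(2v)$. Choosing $p>1$ with $p\beta<1/2$ then closes the argument.

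For (ii), no new estimate is needed. The last sentence of Lemma \ref{lem:sn} states that $SN(\lambda_a)$ is stochastically increasing in $c^\ast(a)$. Since a distribution $F$ stochastically dominates $G$ if and only if $\int f\,dF\ge\int f\,dG$ for every nondecreasing $f$ whose integrals exist, and since the bound of part (i) renders $\mathbf{E}[f(W^\ast)]$ finite, the limit $\mathbf{E}[f(W^\ast)]$ is automatically a nondecreasing function of $c^\ast(a)$.

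I expect the main obstacle to be the uniform exponential-square moment bound in (i). The crude Hoeffding bound, which holds for all $t$, only delivers variance proxy $1/\mathbf{E}[X_1^2]$ and hence control for $\gamma<\mathbf{E}[X_1^2]/2$, possibly far below $1/2$; recovering the sharp Gaussian constant requires exploiting the negligibility of the individual weights and controlling the large-deviation range carefully, since the refined sub-Gaussian behavior is not available uniformly in $t$. I would also watch the admissible range of $\beta$: because $(W^\ast)^2=(S^\ast)^2$ with $S^\ast\sim N(0,1)$, one has $\mathbf{E}[\exp(\beta(W^\ast)^2)]<\infty$ only for $\beta<1/2$, so the limit is genuinely finite precisely in that range; for $\beta\in[1/2,1)$ the conclusions survive only when the expectations are read in the extended reals, with Fatou's lemma supplying the (possibly infinite) limit.
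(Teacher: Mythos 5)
Your proposal has the same architecture as the paper's proof. For (i), the paper likewise combines convergence in law of $f(W^n)$ to $f(W^\ast)$ (via the continuous mapping/Slutsky argument, using that a nondecreasing $f$ has a Lebesgue-null set of discontinuities) with uniform integrability of $\{f(W^n)\}$, and likewise reduces uniform integrability to $L^p$-boundedness of the dominating sequence $\exp(\alpha+\beta (W^n)^2)$ for some $p>1$ with $p\beta<1$; for (ii) it reads the conclusion off the last sentence of Lemma \ref{lem:sn}, exactly as you do. The one place you genuinely diverge is the tail estimate that closes the $L^p$ bound, and your skepticism there is well founded. The paper applies Hoeffding's inequality and asserts $\mathbf{P}\{(W^n)^2>\log y/(p\beta)\}\leq 2y^{-1/(p\beta)}$; but since the summands $s_iX_i/\sigma^n$ lie in $[-s_i/\sigma^n,s_i/\sigma^n]$ and $\sum_i(2s_i/\sigma^n)^2=4/\mathbf{E}[X_1^2]$, Hoeffding actually yields $\mathbf{P}\{|S^n/\sigma^n|>t\}\leq 2\exp\left(-\mathbf{E}[X_1^2]\,t^2/2\right)$, i.e.\ the tail $2y^{-\mathbf{E}[X_1^2]/(2p\beta)}$, whose integral converges only when $p\beta<\mathbf{E}[X_1^2]/2\leq 1/2$. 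So the paper's own proof, read literally, covers only $\beta<\mathbf{E}[X_1^2]/2$ rather than all $\beta\in(0,1)$, and your further observation that $\mathbf{E}[\exp(\beta(W^\ast)^2)]=\infty$ for $\beta\geq 1/2$ shows the statement needs that restriction (or an extended-real reading) in any case; this is harmless for the three applications in Section \ref{sec:vnm}, where $\beta$ may be taken arbitrarily small. Your proposed repair via Bernstein/Bennett to push the variance proxy toward $1$ is the right idea for enlarging the admissible range toward $\beta<1/2$, but, as you yourself flag, it is not closed: in the range $t\gtrsim\sigma^n/\bar{s}$ Bernstein's exponent degrades to linear in $t$, which does not dominate $e^{\gamma t^2}$, so the large-deviation regime still needs a separate argument. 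In short, your skeleton matches the paper's, and your critique of the Hoeffding constant identifies a real (if repairable) slip in the published argument rather than a defect in your own.
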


	\begin{proof}
		(i) We use superscript $n$ to indicate the number of groups. By Lemma \ref{lem:sn}, $W^n\to W^\ast$ in law.
		Since $f$ is nondecreasing,
		the set of its discontinuity
		points $\Delta$
		has Lebesgue measure 0.
		Since $W^\ast$
		is continuously distributed,
		$\mathbf{P}\{W^\ast\in \Delta\}=0$.
		Thus,
		by Slutsky's theorem,\footnote{
			See e.g. \cite{ferguson2017course}.	
		}
		$f(W^n)\to f(W^\ast)$
		in law.

		If, in addition,
		the sequence of random variables
		$\{f(W^n)\}$
		is uniformly integrable (UI),
		then statement (i) follows.\footnote{
			A sequence
			$\{X^n\}$
			of random 
			variables
			is called
			\textit{uniformly
				integrable}
			if for any
			$\epsilon>0$,
			there exists
			$x>0$
			such that
			$\mathbf{E}
			(|X^n|\mathbf{1}_{\{|X^n|>x\}})
			<\epsilon$
			for all $n$.
			See e.g.
			\cite{williams1991probability}.}
		To show that
		$\{f(W^n)\}$
		is UI,
		it suffices
		to show that the dominating
		sequence
		$\{Y^n\}:=
		\{\exp(\alpha+\beta(W^n)^2)\}$
		is UI.
		Without loss of generality,
		assume $\alpha=0$.
		A sufficient
		condition for uniform integrability
		is boundedness in $L^p$
		for some $p>1$,
		i.e.,
		that there is a constant $K<\infty$
		with
		$\mathbf{E}\left[
		(Y^n)^p
		\right]<K$ for all $n$.

		To check the above
		sufficient
		condition,
		choose $p>1$
		so that $p\beta\in(0,1)$.
		Note that
		\[
		\mathbf{E}\left[
		(Y^n)^p
		\right]
		=\int_0^\infty
		\mathbf{P}\left\{
		(Y^n)^p>y
		\right\}dy
		\leq 
		1+
		\int_1^\infty
		\mathbf{P}\left\{
		(Y^n)^p>y
		\right\}dy.
		\]
		Rewrite
		$\mathbf{P}\left\{
		(Y^n)^p>y
		\right\}
		=
		\mathbf{P}\left\{
		(W^n)^2>
		\frac{\log y}{p\beta}
		\right\}$ for $y>1$.
		By definition,
		$(W^n)^2=
		\left(\sum_{i=1}^n
		s_iX_i/\sigma
		\right)^2$,
		the square
		of a sum of independent
		variables with mean 0,
		where the sum
		is normalized to unit variance.
		We can thus apply Hoeffding's inequality
		to get
		\begin{equation*}
			\begin{split}
				\mathbf{P}\left\{
				(W^n)^2>\frac{\log y}{p\beta}
				\right\}
				&\leq
				2y^{-\frac{1}{p\beta}}.
			\end{split}
		\end{equation*}
		Thus, recalling
		that $p\beta\in(0,1)$,
		\begin{equation*}
			\begin{split}
				\int_1^\infty
				\mathbf{P}\left\{
				(Y^n)^p>y
				\right\}dy
				&\leq
				\int_1^\infty
				2y^{-\frac{1}{p\beta}}
				dy
				=\frac{2p\beta}{1-p\beta}.
			\end{split}
		\end{equation*}
		Therefore
		$\mathbf{E}\left[
		(Y^n)^p\right]
		\leq (1+p\beta)/(1-p\beta)
		<\infty$ for all $n$.

		\bigskip
		\noindent
		(ii)
		This follows directly from the second sentence in Lemma \ref{lem:sn}.
	\end{proof}

	\bigskip
	\noindent
	{\bf Examples.}
	We provide three examples of expectation-form objective functions.
	The simplest one is
	the \textit{expected welfare}:
	\[
	u(a)=\mathbf{E}[W],
	\]
	which corresponds to the linear function
	$f(w)=w$.

	A related index is
	the \textit{mean majority
		deficit}, which
	was introduced by
	\cite{felsenthal1999minimizing}.
	It is the expected difference
	between
	the size of
	the majority camp,
	$\left(|S|+\sum s_i
	\right)/2$,
	and the
	number
	of individuals
	who prefer the social decision,
	$\left(DS+\sum s_i
	\right)/2$; that is,
	$\mathbf{E}\left(|S|-
	DS
	\right)/2$.
	We redefine it by dividing by $\sigma$:
	\[
	\delta(a)=\frac{1}{2}\mathbf{E}\left(
	\frac{|S|}{\sigma}-
	W
	\right).
	\]
	Note that the distribution
	of $|S|/\sigma$ does not depend
	on the weight allocation $a$.
	Thus 
	the negative of
	the mean majority deficit
	$-\delta(a)$
	can be represented as the expectation
	$\mathbf{E}\left[
	f(W)
	\right]$
	for the positive affine function
	$f(w)=
	\left(w-\mathbf{E}(|S|)/\sigma\right)/2$.

	Another
	index
	that has been
	extensively studied
	in the literature
	is
	the \textit{probability of inversion}
	(or the
	\textit{probability of the referendum
		paradox}).
	It is the probability
	that the majority of individuals
	dislike the social decision:
	\[
	p(a)=
	\mathbf{P}\{W<0\}.
	\]
	The complementary probability $1-p(a)$ is the expectation
	$\mathbf{E}\left[
	f(W)
	\right]$
	for the
	nondecreasing step function
	$f=\mathbf{1}_{\{w\geq0\}}$.

	For finite $n$,
	these three indices 
	induce different complete extensions
	of the stochastic welfare ordering $\succsim$ over weight allocations. Since the functions $f$ corresponding to these indices have square exponential bounds,
	Corollary \ref{cor:u}
	implies that they are asymptotically monotone in cosine proportionality.
	In fact, Lemma \ref{lem:sn}
	allows us to
	derive the limiting
	monotone
	functions explicitly.

	\begin{cor}
		As $n\to\infty$,
		\begin{enumerate}
			\item[(i)]
			$u(a)\to
			\sqrt{\frac{2}{\pi}}\rho
			c^\ast(a)$.
			\item[(ii)]
			$\delta(a)
			\to
			\frac{1-\rho c^\ast(a)}{\sqrt{2\pi}}$.
			\item[(iii)]
			$p(a)\to
			\frac{\arccos\left(\rho
				c^\ast(a)\right)}{\pi}$.\footnote{\cite{Hinich1975}
				derives a similar limit formula
				of $p(a)$, assuming
				equally sized groups.
			}
		\end{enumerate}
		\label{cor:mean}
	\end{cor}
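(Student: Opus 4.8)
The plan is to treat each of the three indices as the expectation of a nondecreasing, square-exponentially-bounded transform of $W$ (exactly, or up to an affine change of variables), to invoke Corollary \ref{cor:u}(i) together with Lemma \ref{lem:sn} so as to replace $W$ by its limit $W^\ast\sim SN(\lambda_a)$, and then to evaluate the resulting functional of the skew normal. The computations all rest on the algebraic identity $\lambda_a/\sqrt{1+\lambda_a^2}=\rho c^\ast(a)$, which follows at once from $\lambda_a=\rho c^\ast(a)/\sqrt{1-\rho^2c^\ast(a)^2}$ (indeed $1+\lambda_a^2=1/(1-\rho^2 c^\ast(a)^2)$). I write $m=\rho c^\ast(a)$ for brevity.

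For (i), the linear $f(w)=w$ is nondecreasing with a square exponential bound, so Corollary \ref{cor:u}(i) gives $u(a)=\mathbf{E}[W]\to\mathbf{E}[W^\ast]$. The mean of $SN(\lambda_a)$ is the standard value $\sqrt{2/\pi}\,\lambda_a/\sqrt{1+\lambda_a^2}$ \cite{azzalini1996multivariate}, which by the identity above equals $\sqrt{2/\pi}\,m=\sqrt{2/\pi}\,\rho c^\ast(a)$. For (ii), I would split $\delta(a)=\tfrac12\bigl(\mathbf{E}[|S|/\sigma]-\mathbf{E}[W]\bigr)$. The second term converges by part (i). For the first, taking $\eta=0$ in Lemma \ref{lem:clt} shows $S/\sigma\to N(0,1)$ in law, so $|S|/\sigma$ converges in law to a half-normal with mean $\sqrt{2/\pi}$; the family $\{|S^n|/\sigma^n\}$ is uniformly integrable by exactly the Hoeffding bound used in the proof of Corollary \ref{cor:u} (now applied to the normalized squared sum $(S^n/\sigma^n)^2$), whence $\mathbf{E}[|S|/\sigma]\to\sqrt{2/\pi}$. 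Combining, $\delta(a)\to\tfrac12\sqrt{2/\pi}\,(1-m)=(1-\rho c^\ast(a))/\sqrt{2\pi}$.

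For (iii), write $1-p(a)=\mathbf{E}[\mathbf{1}_{\{W\geq0\}}]$ with the bounded nondecreasing $f=\mathbf{1}_{\{w\geq0\}}$; since $W^\ast$ has no atom at $0$, Corollary \ref{cor:u}(i) yields $p(a)\to\mathbf{P}\{W^\ast<0\}$. Rather than integrate the skew-normal density directly, I would use the representation established in the proof of Lemma \ref{lem:sn}, namely $W^\ast=S^\ast\operatorname{sign}T^\ast$ with $(S^\ast,T^\ast)$ standard bivariate normal of correlation $m$. Then $\{W^\ast<0\}$ is the event that $S^\ast$ and $T^\ast$ carry opposite signs; using the symmetry of the centered normal law (so that $\mathbf{P}\{S^\ast<0,T^\ast>0\}=\mathbf{P}\{S^\ast>0,T^\ast<0\}$) and the quadrant (Sheppard) formula $\mathbf{P}\{S^\ast>0,T^\ast>0\}=\tfrac14+\tfrac{\arcsin m}{2\pi}$, one gets $\mathbf{P}\{W^\ast<0\}=\tfrac12-\tfrac{\arcsin m}{\pi}=\arccos(m)/\pi$, invoking $\arccos m=\pi/2-\arcsin m$.

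The conceptual content is entirely carried by Corollary \ref{cor:u} and Lemma \ref{lem:sn}; the remaining steps are standard skew-normal and bivariate-normal identities. The one point needing care beyond a direct appeal to Corollary \ref{cor:u} is the convergence $\mathbf{E}[|S|/\sigma]\to\sqrt{2/\pi}$ used in (ii): here $|S|/\sigma$ is a transform of $S/\sigma$ rather than of $W$, and $|\cdot|$ is not monotone, so the corollary does not apply verbatim. This is why I isolate that term and re-run the uniform-integrability argument for the normalized total vote margin. I expect this to be the main (though minor) obstacle.
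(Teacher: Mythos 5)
Your proof is correct. Parts (i) and (ii) follow the paper's own route: the skew-normal mean formula for (i), and for (ii) the decomposition of $\delta(a)$ into the half-normal limit of $|S|/\sigma$ plus the limit from (i). In fact you are slightly more careful than the paper on (ii): the paper asserts convergence of $\mathbf{E}[|S|/\sigma]$ directly from convergence in law, while you supply the missing uniform-integrability step. (A remark: since $|S^n|/\sigma^n=|W^n|$ has second moment exactly $1$ for every $n$, boundedness in $L^2$ gives uniform integrability immediately, so the Hoeffding detour is not even needed.) Where you genuinely diverge is (iii). The paper integrates the skew-normal density, $\mathbf{P}\{W^\ast<0\}=\int_{-\infty}^0 2\phi(x)\Phi(\lambda_a x)\,dx=\frac{1}{\pi}\arctan\frac{1}{\lambda_a}$, citing a tabulated normal integral (Owen's Formula 1,010.3), and then converts $\arctan(1/\lambda_a)$ to $\arccos(\rho c^\ast(a))$. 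You instead go back to the representation $W^\ast=S^\ast\operatorname{sign}T^\ast$ from the proof of Lemma \ref{lem:sn} and apply Sheppard's quadrant formula to the bivariate normal $(S^\ast,T^\ast)$, obtaining $\mathbf{P}\{W^\ast<0\}=\tfrac12-\tfrac{\arcsin(\rho c^\ast(a))}{\pi}=\arccos(\rho c^\ast(a))/\pi$; the arithmetic checks out. Your route is more self-contained (Sheppard's formula is classical and requires no table lookup) and arguably more transparent, since it exposes the inversion event as ``$S^\ast$ and $T^\ast$ disagree in sign''; the paper's route stays entirely within the skew-normal framework established in Lemma \ref{lem:sn}. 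Both are valid, and the identity $\lambda_a/\sqrt{1+\lambda_a^2}=\rho c^\ast(a)$ that you isolate is exactly the bridge both proofs rely on.
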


	\begin{proof}
		(i) The mean of the skew normal distribution
		$SN(\lambda)$
		is 
		$\sqrt{2/\pi}\cdot
		\lambda/\sqrt{1+\lambda^2}$.\footnote{
			See, e.g., \cite{azzalini1996multivariate}.}
		Plugging $\lambda_a$ from Lemma \ref{lem:sn} yields (i).
		
		\noindent
		(ii)
		In the definition of $\delta(a)$, the variable $|S|/\sigma$
		converges in law to the standard half-normal distribution,
		whose mean is $\sqrt{2/\pi}$.
		Combining this with (i) proves (ii).

		\noindent
		(iii)
		The limit probability of inversion
		is 
		\[
		\mathbf{P}\{W^\ast<0\}=\int_{-\infty}^02\phi(x)\Phi(\lambda_a
		x)dx=
		\frac{1}{\pi}\arctan
		\frac{1}{\lambda_a}.\footnote{
			The second equality
			follows, e.g., from Formula 1,010.3 in
			\cite{owen1980table}.}
		\]
		Noting that $1/\lambda_a=
		\tan\theta$
		for $\theta\in[0, \pi/2]$
		satisfying $\cos\theta=
		\rho c^\ast(a)$ proves (iii).
	\end{proof}

	\section{Extensions}
	\label{sec:ext}
	
	\subsection{Preference intensities}
	\label{sec:utility}
	
 We have shown Theorem \ref{thm:main} based on the definition of social welfare as the (normalized) number of individuals who prefer the social decision, which ignores cardinal differences between individual preferences. This section discusses an extension of the model for which the theorem is valid, even if heterogeneous preference intensities are incorporated in the definition of social welfare.

	Let $m_i$ denote the population of group $i$ and label its members by $k=1,\cdots,m_i$. 
	We denote by $U_{ik}$ the utility member $k$ of group $i$ obtains from alternative $+1$, and fix the utility from alternative $-1$ to be 0. The member thus votes for $+1$ or $-1$, depending on whether $U_{ik}$ is positive or negative. We assume that $U_{ik}$ has the following form:
	\[
	U_{ik}=\Theta_i+\epsilon_{ik}.
	\]
	The term $\Theta_i$ summarizes group-specific factors affecting the utility, while $\epsilon_{ik}$ summarizes individual-specific factors.
	We make the following assumptions on the distribution of these variables:
	(i) All $\Theta_i$ and $\epsilon_{ik}$ ($i=1,2,\cdots$; $k=1,2,\cdots$) are independent. (ii) All $\Theta_i$ have a common symmetric and nondegenerate distribution $G_\Theta$ on $[-\bar{\theta},\bar{\theta}]$. (iii) All $\epsilon_{ik}$ have a common symmetric and nondegenerate distribution $G_\epsilon$ on $[-\bar{\epsilon},\bar{\epsilon}]$.

	Consider the large-population limit as the absolute size of each group $m_i$ tends to infinity, and the relative size of group $i$ normalized by group 1's size, $m_i/m_1$, tends to the constant $s_i\in[0,\bar{s}]$.	
	Then, \textit{almost surely, the group-$i$ average utility of alternative $+1$, $\sum_{k=1}^{m_i}U_{ik}/m_i$, converges to $\Theta_i$, while the group-$i$ vote margin in favor of alternative $+1$, $\sum_{k=1}^{m_i}\text{\rm sign }U_{ik}/m_i$, converges to $2G_\epsilon(\Theta_i)-1$.} The first part of this claim follows from a direct application of the law of large numbers. The second part can be proved as follows: by the law of large numbers, $\sum_{k=1}^{m_i}\text{\rm sign }U_{ik}/m_i\to \mathbf{E}[\text{\rm sign }U_{ik}|\Theta_i]=\mathbf{P}\{U_{i1}>0|\Theta_i\}-\mathbf{P}\{U_{i1}<0|\Theta_i\}=2G_\epsilon(\Theta_i)-1$ almost surely.
	
	The total utility from alternative $+1$ is $\sum s_i \{2G_{\epsilon}(\Theta_i)-1\}$, and hence the total utility from the social decision is $D\times\sum s_i \{2G_{\epsilon}(\Theta_i)-1\}$. \textit{If we redefine social welfare $W$ to be (a normalization of) the total utility, then Theorem \ref{thm:main} holds.} The proof only requires us to redefine $X_i:=\Theta_i$, $S:=\sum s_i\{2G_{\epsilon}(\Theta_i)-1\}$,
	and $T:=\sum a(s_i)r(\Theta_i)$;
	then Lemmas \ref{lem:clt} and \ref{lem:sn} hold with correlation coefficient $\rho:=\text{\rm Corr }[2G_{\epsilon}(\Theta_i)-1,r(\Theta_i)]$, and the theorem follows.

	\subsection{Independent preferences}
	\label{sec:indep}

		Assumption \ref{as:x} captures a situation
		in which individual preferences 
		are correlated within the groups.
		This section discusses how the results in this paper change under the alternative assumption that the preferences are independent within the groups.

		We use the same basic model of preferences
		as in Section \ref{sec:utility}, but assume that $\Theta_i\equiv0$ and $\epsilon_{ik}=\pm1$ equally likely. Thus, each individual independently prefers alternative $+1$ or $-1$ with equal probabilities.
		Consider the large-population limit
		as $m_1\to\infty$
		and $m_i/m_1\to s_i$
		for each group $i$.
		The following lemma	is a direct consequence of the central limit theorem.
		
		\begin{lem}
			The group-$i$ vote margin
			in favor of alternative $+1$
			multiplied by $\sqrt{m_1}$, i.e.,
			$\sqrt{m_1}\sum_{k=1}^{m_i}\epsilon_{ik}/m_i$,
			converges
			in law to 
			$X_i:=\frac{N_i}{\sqrt{s_i}}$,
			where $N_i$
			($i=1,\cdots,n$)
			are independent
			and have the
			standard
			normal distribution $N(0,1)$.
			\label{lem:lim_pop_ic}
		\end{lem}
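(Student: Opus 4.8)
The plan is to reduce the statement to the classical central limit theorem together with a deterministic rescaling handled by Slutsky's theorem. First I would fix a group $i$ and note that, under the maintained assumptions, $\epsilon_{i1},\dots,\epsilon_{i,m_i}$ are independent and identically distributed with $\mathbf{E}[\epsilon_{ik}]=0$ and $\mathbf{E}[\epsilon_{ik}^2]=1$. Because $s_i>0$ forces $m_i\to\infty$ along the limit $m_1\to\infty$, $m_i/m_1\to s_i$, the Lindeberg--L\'evy central limit theorem applies and the normalized partial sum $M_i:=m_i^{-1/2}\sum_{k=1}^{m_i}\epsilon_{ik}$ converges in law to a standard normal variable $N_i$.

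Next I would rewrite the quantity of interest so as to expose $M_i$:
\[
\sqrt{m_1}\,\frac{1}{m_i}\sum_{k=1}^{m_i}\epsilon_{ik}
=\sqrt{\frac{m_1}{m_i}}\;M_i.
\]
Since $m_i/m_1\to s_i>0$ by assumption, the deterministic prefactor satisfies $\sqrt{m_1/m_i}\to 1/\sqrt{s_i}$. Applying Slutsky's theorem to the product of a deterministic sequence converging to a constant and a sequence converging in law then yields convergence in law to $N_i/\sqrt{s_i}=X_i$, which is the marginal claim for each $i$.

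It remains to upgrade this to joint convergence with \emph{independent} limits $N_1,\dots,N_n$. The key observation is that the sum $M_i$ is built from the block $\{\epsilon_{ik}:k\le m_i\}$, and these blocks are disjoint across $i$; since the entire family $\{\epsilon_{ik}\}$ is independent, the vector $(M_1,\dots,M_n)$ has independent coordinates at every finite stage. Joint convergence of such a vector follows from the factorization of characteristic functions combined with the marginal CLT above, so the limit vector has independent standard-normal coordinates. Rescaling coordinatewise by the deterministic diagonal factors $\sqrt{m_1/m_i}\to1/\sqrt{s_i}$ and invoking the multivariate form of Slutsky's theorem then delivers joint convergence to $(N_1/\sqrt{s_1},\dots,N_n/\sqrt{s_n})$ with the $N_i$ independent.

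The argument is essentially routine, so I do not anticipate a substantial obstacle; the only points requiring care are the use of the \emph{multivariate} (rather than one-dimensional) Slutsky theorem, so that independence of the coordinates is genuinely transferred to the limit, and the standing requirement $s_i>0$, which ensures both that $m_i\to\infty$ (so the CLT is available) and that the prefactor $1/\sqrt{s_i}$ is finite, as is implicit in the very statement $X_i=N_i/\sqrt{s_i}$.
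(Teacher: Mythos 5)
Your proof is correct and is exactly the argument the paper has in mind: it states the lemma as ``a direct consequence of the central limit theorem'' without further detail, and your write-up (classical CLT per block, Slutsky for the deterministic factor $\sqrt{m_1/m_i}\to 1/\sqrt{s_i}$, independence of disjoint blocks for the joint limit) simply fills in those routine steps.
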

		
		Note the following two differences from the case with intragroup correlations of preferences:
		first, $X_i$ are 
		\textit{not} identically distributed, since the variances are smaller for larger groups;
		second, the support of $X_i$ is $\mathbf{R}$, not $[-1,1]$.

		We provide
		an analog
		of Theorem \ref{thm:main}
		for this independent-preference 
		model, but not in full
		generality.
		We only consider two specific representation rules:
		\textit{winner-take-all}
		$r(x)=\text{sign\,} x$
		and 
		\textit{proportional representation}
		$r(x)=x$.

		\begin{theorem}
			The following statements hold in the independent-preference model:
			\begin{itemize}
				\item[(i)]
				Under winner-take-all, as the number of groups $n\to\infty$,
				the asymptotic stochastic welfare ordering
				$\succsim^\ast$
				coincides with
				the ordering induced by the following index:
				\[
				c_{\rm sqrt}^\ast
				(a)=
				\frac{\int
					\sqrt{s}a(s) d\Psi^\ast(s)}{
					\sqrt{
						\int s d\Psi^\ast(s)
						\int a(s)^2 d\Psi(s)
						}
				}.
				\]
				\item[(ii)]
				Under proportional representation, for any number $n$ of groups,
				the stochastic welfare ordering
				$\succsim$
				coincides with
				the ordering induced by the following index:
				\[
				\hat{c}_{\rm sqrt}
				(a)=
				\frac{\sum a(s_i)
					}{
					\sqrt{
						\sum s_i 
						\sum \frac{a(s_i)^2}{s_i}
					}
				}.
				\]
			\end{itemize}
			\label{thm:ic}
		\end{theorem}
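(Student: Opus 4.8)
The plan is to reduce both parts to the skew-normal machinery already built in Lemma~\ref{lem:sn}, by showing that the pair $(S,T)$ — now constructed from the independent-preference limits $X_i=N_i/\sqrt{s_i}$ of Lemma~\ref{lem:lim_pop_ic} — is (exactly or asymptotically) bivariate normal, and by identifying its correlation coefficient with the relevant index. Throughout I take welfare to be $W=DS/\sigma_S$ with $D=\text{sign}\,T$ and $\sigma_S=\sqrt{\sum s_i}$, the standard deviation of $S=\sum s_iX_i=\sum\sqrt{s_i}N_i$; note that $\sigma_S$ does not depend on $a$, so comparing the laws of $W$ across allocations remains meaningful even though the old normalization $\sqrt{\mathbf{E}[X_1^2]\sum s_i^2}$ no longer applies in the non-identically-distributed case. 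Once $W$ (or its limit) is shown to be $SN(\lambda)$ with $\lambda$ a strictly increasing function of the index, the ordering conclusion follows at once from the final sentence of Lemma~\ref{lem:sn}.

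For part (ii) (proportional representation, $r(x)=x$) the key observation is that $S=\sum\sqrt{s_i}N_i$ and $T=\sum a_iX_i=\sum(a_i/\sqrt{s_i})N_i$ are both linear combinations of the \emph{same} independent Gaussian vector $(N_1,\dots,N_n)$, so $(S,T)$ is \emph{exactly} bivariate normal for every finite $n$, with no tie since $T$ is a nondegenerate normal. A direct moment computation gives $\text{Var}(S)=\sum s_i$, $\text{Var}(T)=\sum a_i^2/s_i$ and $\text{Cov}(S,T)=\sum a_i$, hence $\text{Corr}(S,T)=\hat c_{\rm sqrt}(a)$. Applying Azzalini's Proposition~2 exactly as in Lemma~\ref{lem:sn} (now to the exact rather than the limiting vector) yields $W\sim SN(\lambda)$ with $\lambda=\hat c_{\rm sqrt}(a)/\sqrt{1-\hat c_{\rm sqrt}(a)^2}$; since $\lambda$ is strictly increasing in $\hat c_{\rm sqrt}(a)\in[0,1]$ and $SN(\lambda)$ is stochastically increasing in $\lambda$, the ordering $\succsim$ coincides with that induced by $\hat c_{\rm sqrt}$ for every $n$.

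For part (i) (winner-take-all, $r(x)=\text{sign}\,x$) we instead have $T=\sum a_i\,\text{sign}(N_i)$, a weighted sum of independent Rademacher variables which is no longer Gaussian at finite $n$, so an asymptotic argument is required. I would establish joint asymptotic normality of $(S/\sigma_S,T/\sigma_T)$, with $\sigma_T=\sqrt{\sum a_i^2}$, via the Cram\'er--Wold device followed by a central limit theorem for the triangular array $R_i^n=\xi\sqrt{s_i}N_i/\sigma_S^n+\eta\,a_i\,\text{sign}(N_i)/\sigma_T^n$, paralleling Lemma~\ref{lem:clt}. Using $\mathbf{E}[N_i\,\text{sign}(N_i)]=\mathbf{E}|N_i|=\sqrt{2/\pi}$ gives $\text{Cov}(S,T)=\sqrt{2/\pi}\sum\sqrt{s_i}a_i$, so the limiting correlation is $\rho'=\sqrt{2/\pi}\,c_{\rm sqrt}^\ast(a)$. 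Since $\rho'\le\sqrt{2/\pi}<1$ always, the parameter $\lambda=\rho'/\sqrt{1-\rho'^2}$ is finite, and Lemma~\ref{lem:sn} again applies: $W\to SN(\lambda)$ in law with $\lambda$ strictly increasing in $c_{\rm sqrt}^\ast(a)$, which gives the asymptotic ordering.

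The main obstacle is the Lindeberg verification in part (i). Unlike Lemma~\ref{lem:clt}, where $X_i\in[-1,1]$ made the summands uniformly bounded and the check immediate, here the Gaussian factors $N_i$ are unbounded, so I cannot argue that $(R_i^n/\zeta^n)^2$ is eventually below a given $\delta$. Instead I would verify Lyapunov's condition with third moments: writing $R_i^n=c_iN_i+d_i\,\text{sign}(N_i)$ with $c_i=\xi\sqrt{s_i}/\sigma_S^n$ and $d_i=\eta a_i/\sigma_T^n$, one has $\sum_i c_i^2=\xi^2$ and $\sum_i d_i^2=\eta^2$, while $\max_i|c_i|$ and $\max_i|d_i|$ tend to $0$ because $s_i,a_i$ are bounded whereas $\sum s_i$ and $\sum a_i^2$ diverge (using Assumption~2 together with $\int a^2\,d\Psi^\ast>0$). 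Then $\sum_i\mathbf{E}|R_i^n|^3\le 4\,\mathbf{E}|N|^3(\max_i|c_i|)\sum_i c_i^2+4(\max_i|d_i|)\sum_i d_i^2\to0$, and dividing by the positive limit $(\zeta^\ast)^3=(\xi^2+\eta^2+2\xi\eta\rho')^{3/2}$ — positive because the matrix $\bigl(\begin{smallmatrix}1&\rho'\\\rho'&1\end{smallmatrix}\bigr)$ is positive definite for $|\rho'|<1$ — secures the condition. The remaining steps merely reproduce the conditioning argument of Lemma~\ref{lem:sn}.
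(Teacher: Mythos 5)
Your proposal is correct and follows essentially the same route as the paper: part (ii) via exact bivariate normality of $(S,T)$ with correlation $\hat{c}_{\rm sqrt}(a)$, and part (i) via the Cram\'er--Wold device plus a CLT for the triangular array, followed in both cases by the skew-normal conditioning argument of Lemma \ref{lem:sn}. The only (immaterial) difference is that you verify Lyapunov's condition with third moments while the paper uses fourth moments bounded via Minkowski's inequality; both rest on the same facts that $\sigma^n,\tau^n\to\infty$ while the summands' coefficients are uniformly small.
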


		\begin{proof}
			The total
		vote margin 
			and the
			total weight
			margin in favor of alternative $+1$ are
			\[
			S=
			\sum_{i=1}^n\sqrt{s_{i}}N_{i},\,
			T=
			\sum_{i=1}^n
			a(s_{i})r(X_i).
			\]
			The variance
			of $S$
			is $\sigma^2=
			\sum s_{i}$.
			We begin with the proof of
			part (ii).
			
			\bigskip
			\noindent
			\textit{Proof of (ii)}.
			Under proportional representation,
			$r(X_{i})=X_i=
			\frac{N_{i}}{\sqrt{s_{i}}}$.
			The variance of $T$
			is 
			$\tau^2=
			\sum_{i=1}^n\frac{a(s_{i})^2}{s_{i}}$.
			Since a sum
			of independent
			and normally distributed
			random variables
			is again normally distributed,
			it follows that
			for any $n$,
			$\left(
			\frac{S}{\sigma},
			\frac{T}{\tau}
			\right)$
			has the centered normal distribution
			with
			unit variances
			and correlation coefficient
			$\hat{c}_{\rm sqrt}(a)$.
			The same argument as in the
			proof of Lemma \ref{lem:sn}
			shows that
			welfare $W$
			has the skew normal distribution
			with parameter
			$\lambda=
			\frac{\hat{c}_{\rm sqrt}(a)}{\sqrt{
					1-\hat{c}_{\rm sqrt}(a)^2
			}}$,
			and hence that
			$W$
			is stochastically
			increasing in $\hat{c}_{\rm sqrt}(a)$.
			
			\bigskip
			\noindent
			\textit{Proof of (i)}.
			Under winner-take-all,
			$r(X_i)=\text{ sign }N_i$.
			The variance of $T$
			is now $\tau^2=\sum a(s_{i})^2$.
			We first
			show that
			$\left(
			\frac{S}{\sigma},
			\frac{T}{\tau}
			\right)$
			converges
			in law
			to the centered
			normal distribution
			with unit variances
			and correlation
			coefficient
			$\sqrt{2/\pi}c_{\rm sqrt}^\ast(a)$.
			As in the
			proof of
			Lemma \ref{lem:clt}, 
			for any
			two real numbers
			$\alpha$ and $\beta$
			not both zero,
			let
			\[
			Z^n=\alpha\frac{S^n}{\sigma^n}
			+\beta\frac{T^n}{\tau^n}=
			\sum_{i=1}^nR_{i}^n,\text{
				where
			}R_{i}^n=\alpha\frac{\sqrt{s_{i}}N_{i}}{\sigma^n}+
			\beta\frac{a(s_{i}){\rm\,sign\,}N_{i}}{\tau^n}.
			\]
			The variance
			of $Z^n$
			is 
			$(\zeta^n)^2=
			\alpha^2+\beta^2+
			2\sqrt{\frac{2}{\pi}}\alpha\beta
			c_{\rm sqrt}(a)$, where we define $c_{\rm sqrt}(a):=\sum\sqrt{s_i}a(s_i)/
			\sqrt{\sum s_i\sum a(s_i)^2}$.\footnote{The computation
				of the variance uses the fact that
				$\mathbf{E}(Z_{i} {\rm\,sign\,}Z_{i})=
				\mathbf{E}(|Z_{i}|)=
				\sqrt{\frac{2}{\pi}}$.}
			By the Cram\'{e}r-Wold
			Device,
			it suffices
			to show that
			$\frac{S^n}{\zeta^n}$
			converges in law
			to the standard
			normal distribution.
			To do this,
			we use the Lyapunov
			Central Limit Theorem
			(Theorem 27.3 in \citealt{billingsley2008probability}).
			We need to check Lyapunov's condition:
			\[
			\frac{1}{(\zeta^n)^4}\sum_{i=1}^n
			\mathbf{E}[(R_{i}^n)^4]\to0.
			\]
			Minkowski's
			inequality
			implies that
			for all $i=1,\cdots,n$,
			\[
			\left[\mathbf{E}\left(
			(R_{i}^n)^4
			\right)\right]^{\frac{1}{4}}
			\leq
			\frac{|\alpha|
				\sqrt{\bar{s}}}{
				\sigma^n}
			\left[\mathbf{E}(N_{1}^4)\right]^{\frac{1}{4}}
			+
			\frac{|\beta|\bar{a}}{\tau^n}
			=:K_n.
			\]
			Since
			both $\sigma^n$
			and $\tau^n$
			are $O(\sqrt{n})$,
			$K_n=O\left(1/\sqrt{n}\right)$.
			Noting that $\zeta^n$
			converges to a positive 
			finite limit, we have
			\[
			\frac{1}{(\zeta^n)^4}\sum_{i=1}^n
			\mathbf{E}[(R_{i}^n)^4]
			\leq
			\frac{nK_n^4}{(\zeta^n)^4}
			\to0.
			\]
			This completes
			the proof
			for asymptotic normality
			of $\left(
			\frac{S}{\sigma},
			\frac{T}{\tau}
			\right)$.
			The remaining
			step
			that the limit
			distribution of
			$W$
			is stochastically
			increasing
			in $c^\ast_{\rm sqrt}(a)$
			is the same
			as in the
			proof of Lemma
			\ref{lem:sn}.
		\end{proof}

		Theorem \ref{thm:ic}
		shows that
		in the independent-preference model,
		the (asymptotic) stochastic welfare ordering of weight allocations differs from the ordering induced by plain cosine proportionality, and depends on the representation rule $r$.
		Under winner-take-all, the stochastic welfare ordering is asymptotically equivalent to the cosine between the weights of the groups and the \textit{square roots} of the populations in the groups. This is consistent with what is known in the literature as the ``square root law,'' which states that in a model with independent preferences, under winner-take-all, the weights proportional to the square roots of populations are optimal in terms of social welfare, and also in terms of equity among individuals (e.g., \citealt{BarberaJackson2006}, \citealt{BeisbartBovens2007}, and \citealt{Penrose1946}).
		Under proportional representation, the equivalence theorem holds for any number of groups, but with a somewhat complicated index: the cosine proportionality between the square roots of the populations ($\sqrt{s_i}$) and the weights divided by the square roots of the populations ($a_i/\sqrt{s_i}$). Interestingly, if the total weight $\sum a_i$ is fixed, this index is ordinally equivalent to (the negative of) the well-known Saint-Lagu\"{e} index of disproportionality: $\sum a_i^2/s_i$ (see, e.g., \citealt{balinski2010fair}).

	\phantomsection
	\addcontentsline{toc}{section}{References}
	
	\bibliography{bibbib}

\end{document}